\newtheorem{lemma}{Lemma}
\newtheorem{theorem}{Theorem}
\newtheorem{krule}{Reduction Rule}
\begin{document}

\title{Linear-Vertex Kernel for the Problem of Packing $r$-Stars into a Graph without Long Induced Paths}
\author[1]{Florian Barbero}
\author[2]{Gregory Gutin}
\author[2]{Mark Jones}
\author[2]{Bin Sheng}
\author[3,4]{Anders Yeo}
\affil[1]{Laboratoire d'Informatique, Robotique et Micro{\'e}lectronique de Montpellier, 161 rue Ada, 34095 Montpellier cedex 5, France}
\affil[2]{Department of Computer Science, Royal Holloway, University of London, TW20 0EX, Egham, Surrey, UK}
\affil[3]{Engineering Systems and Design, Singapore University of Technology and Design, Singapore} 
\affil[4]{Department of Mathematics,
      University of Johannesburg, 
      Auckland Park, 2006 South Africa}
\date{}
\maketitle

\begin{abstract}
\noindent  Let integers $r\ge 2$ and $d\ge 3$ be fixed. Let ${\cal G}_d$ be the set of graphs with no induced path on $d$ vertices. We study the problem of packing $k$ vertex-disjoint copies of $K_{1,r}$ ($k\ge 2$) into a graph $G$ from parameterized preprocessing, i.e., kernelization, point of view.  
We show that every graph $G\in {\cal G}_d$ can be reduced, in polynomial time, to a graph $G'\in {\cal G}_d$ with $O(k)$ vertices such that $G$ has at least $k$ vertex-disjoint copies of $K_{1,r}$ if and only if $G'$ has. Such a result is known for arbitrary graphs $G$ when $r=2$ and we conjecture that it holds for every $r\ge 2$.
\end{abstract}

\pagestyle{plain}

\section{Introduction}\label{sec:intro}
For a fixed graph $H$, the problem of deciding whether a graph $G$ has $k$ vertex-disjoint copies of $H$ is called  {\sc $H$-Packing}. The problem has many applications (see, e.g., \cite{BYHNSS02,BKGS01,KiHe78}), but unfortunately it is almost always intractable.  Indeed, Kirkpatrick and Hell \cite{KiHe78} proved that if $H$ contains a component with at least three vertices then {\sc $H$-Packing} is NP-complete. Thus, approximation, parameterised, and exponential algorithms have been studied for {\sc $H$-Packing} when $H$ is a fixed graph, see, e.g., \cite{BYHNSS02,Fel+11,Fel+05,PrSl06,WaNiFeCh08}.  

In this note, we will consider {\sc $H$-Packing} when $H=K_{1,r}$ and study  {\sc $K_{1,r}$-Packing}  from parameterized preprocessing, i.e., kernelization, point of view.\footnote{We provide basic definitions on parameterized algorithms and kernelization in the next section, for recent monographs, see \cite{CFLMPPS15,DF13}; \cite{Kra2014,LMS2012} are recent survey papers on kernelization.} Here $k$ is the parameter. As a parameterized problem, {\sc $K_{1,r}$-Packing}  was first considered by Prieto and Sloper \cite{PrSl06} who obtained an $O(k^2)$-vertex kernel for each $r\ge 2$ and a kernel with at most $15k$ vertices for $r=2$. (Since the case $r=1$ is polynomial-time solvable, we may restrict ourselves to $r\ge 2$.)  The same result for $r=2$ was proved by Fellows {\em et al.} \cite{Fel+11} and it was improved to $7k$ by Wang {\em et al.} \cite{WaNiFeCh08}. 

Fellows {\em et al.} \cite{Fel+11} note that, using their approach, the bound of \cite{PrSl06} on the number of vertices in a kernel for any $r\ge 3$ can likely be improved to subquadratic. We believe that, in fact, there is a linear-vertex kernel for every $r\ge 3$ and we prove Theorem \ref{th:ker} to support our conjecture. A path $P$ in a graph $G$, is called {\em induced} if it is an induced subgraph of $G$. For an integer $d\ge 3$, let ${\cal G}_d$ denote the set of all graphs with no induced path on $d$ vertices. 

\begin{theorem}\label{th:ker}
Let integers $r\ge 2$ and $d\ge 3$ be fixed. Then {\sc $K_{1,r}$-Packing} restricted to graphs in ${\cal G}_d$, has a kernel with $O(k)$ vertices. 
\end{theorem}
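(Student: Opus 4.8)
The plan is to follow the standard ``greedy packing plus crown reduction'' template, using the exclusion of long induced paths to control the one structure that a crown argument cannot handle on its own. First I would greedily compute a maximal collection $\mathcal{M}$ of vertex-disjoint copies of $K_{1,r}$. If $|\mathcal{M}|\ge k$ we output a trivial yes-instance, so assume $|\mathcal{M}|\le k-1$ and set $S=V(\mathcal{M})$, giving $|S|\le (k-1)(r+1)=O(k)$. Writing $I=V(G)\setminus S$, maximality of $\mathcal{M}$ means $G[I]$ contains no $K_{1,r}$, i.e.\ $\Delta(G[I])\le r-1$. Since membership in $\mathcal{G}_d$ is preserved by vertex deletion ($\mathcal{G}_d$ is hereditary), every reduction rule I use will only delete vertices (and possibly decrement $k$), so the output stays in $\mathcal{G}_d$ automatically; the whole task then reduces to bounding $|I|$ by $O(k)$.

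The crucial use of the hypothesis $G\in\mathcal{G}_d$ is the following structural fact, which I would isolate as a lemma. A shortest path is always induced, so a connected $P_d$-free graph has diameter at most $d-2$; hence each connected component of $G[I]$ is a connected graph of maximum degree at most $r-1$ and diameter at most $d-2$, and therefore has at most $1+(r-1)+\cdots+(r-1)^{d-2}=:c(r,d)$ vertices, a constant. Thus $G[I]$ is a disjoint union of components of bounded size, and it remains only to bound their \emph{number}. I would first delete every component of $G[I]$ that sends no edge to $S$: such a component is a whole component of $G$ inducing a $K_{1,r}$-free graph, so it lies in no copy of $K_{1,r}$ and its removal changes nothing.

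To bound the number of surviving components I would set up a crown/expansion reduction, treating each small component of $I$ as a single unit attached to its neighbourhood in $S$. Concretely, build the bipartite ``attachment'' graph between $S$ and the components of $G[I]$ (no component is isolated, by the previous step) and invoke the expansion lemma with expansion factor $r$: if the number of components exceeds $r\,|S|$, one obtains a nonempty $S'\subseteq S$ together with a family of components whose only attachments are to $S'$ and in which each vertex of $S'$ is privately assigned $r$ distinct components. Picking one neighbour of $s$ from each of its $r$ private components yields $r$ leaves, hence $|S'|$ additional disjoint copies of $K_{1,r}$ centred on $S'$. Reserving these stars --- deleting their vertices and decrementing $k$ by $|S'|$ --- is safe by the usual crown-exchange argument, as any packing can be rerouted to use the reserved stars. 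Iterating until no rule applies leaves $O(|S|)=O(k)$ components, so $|I|\le c(r,d)\cdot r\,|S|=O(k)$ and $|V(G)|=O(k)$; since each application strictly decreases the number of vertices, the whole procedure runs in polynomial time.

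The main obstacle I anticipate is precisely this last step. Unlike a textbook crown decomposition, the ``crown side'' $I$ here is not independent but only has bounded-degree components, and distinct components can have overlapping and complicated neighbourhoods in $S$. The bounded component size coming from the $P_d$-free hypothesis is what makes each component behave like a bounded gadget, but I expect the delicate points to be (i) formulating the expansion so that reserving $|S'|$ stars is genuinely safe when a single component may be usable by several potential centers, and (ii) accounting for stars whose center lies in $I$ (using at most $r-1$ of its at most $c(r,d)$ component-mates together with vertices of $S$), so that no feasible packing is destroyed by the deletions. Getting the exchange argument to cover both the center-in-$S$ and center-in-$I$ cases simultaneously is, I believe, the technical heart of the proof.
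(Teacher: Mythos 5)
Your proof is correct, and at the key technical step it takes a genuinely cleaner route than the paper, though the skeleton is shared: maximal star packing $S$ with $|S|\le (k-1)(r+1)$; the observation that $P_d$-freeness plus $\Delta(G[I])\le r-1$ bounds every component of $G[I]$ by a constant $c(r,d)$ (shortest paths are induced, so diameter at most $d-2$); deletion of components unattached to $S$ (this is exactly the paper's small-vertex Rule~1); and an expansion-lemma-driven crown rule whose safety argument is exactly the paper's ``constellation'' Rule~2 with $C=S'$ and $L$ the union of the crown components. The difference is where the Expansion Lemma is applied. The paper works at the \emph{vertex} level, on the bipartite graph between $Big(S)$ and $U(D)\cap N(Big(S))$; since the resulting set $T$ need not be closed under components of $G[D]$, the paper must run an iterative repartition $S=Big(S)\cup Small(S)$, $D=B(D)\cup U(D)$, dragging whole components into $B(D)$ while maintaining two invariants, and it develops a recursive ``Modified Expansion Lemma'' to force the clean outcome $B_1=Big(S)$. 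You instead contract each component of $G[I]$ to a single $Y$-vertex, so the plain Expansion Lemma (applicable whenever the number of components exceeds $r|S|\ge rm$) directly returns a component-closed crown: a nonempty $S'\subseteq S$ with $r$ private components per center, all of whose attachments lie in $S'$. Your two hedged worries then dissolve: (i) privacy of the assignment is part of the lemma's conclusion, and (ii) an $r$-star meeting a crown component but avoiding $S'$ would have to lie entirely inside $G[I]$, forcing a center with $r$ neighbours in $I$ and contradicting $\Delta(G[I])\le r-1$; hence every star meeting $L$ meets $C$, which is all the exchange argument needs, covering both center-in-$S$ and center-in-$I$ stars at once. (Deleting only the reserved stars' vertices, as you propose, is also safe, since the leftover fragments then attach to nothing in $S$ and are swept up by your first rule, but deleting $S'\cup L$ outright with $k\mapsto k-|S'|$ is tidier and is exactly the paper's rule.) When no rule applies, at most $r|S|$ components survive, giving $|V(G)|\le |S|(1+r\,c(r,d))=O(k)$, matching the paper's $(k-1)(r+1)(r^{d+1}+1)$ up to constants; what your formulation buys is the elimination of the Modified Expansion Lemma and of the entire $Big/Small$, $B(D)/U(D)$ bookkeeping that constitutes most of the paper's proof.
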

 Since $d$ can be an arbitrary integer larger than two, Theorem \ref{th:ker} is on an ever increasing class of graphs which, in the ``limit", coincides with all graphs. To show that Theorem \ref{th:ker} is an optimal\footnote{If $K_{1,r}$-{\sc Parking} was polynomial time solvable, then it would have a kernel with $O(1)$ vertices.} result, in a sense, we prove that {\sc $K_{1,r}$-Packing} restricted to graphs  in ${\cal G}_d$ is ${\cal NP}$-hard already for $d=5$ and every fixed $r\ge 3$:  
 
 \begin{theorem}\label{th:NP}
 Let $r \geq 3$. It is ${\cal NP}$-hard to decide if the vertex set of a graph in ${\cal G}_5$ can be partitioned into vertex-disjoint copies of $K_{1,r}$. 
\end{theorem}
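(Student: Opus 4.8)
The plan is to reduce from \textsc{Exact Cover by $r$-Sets}, which is ${\cal NP}$-complete for every fixed $r\ge 3$. Given a universe $X$ with $|X|=rt$ and a family ${\cal C}=\{C_1,\dots,C_m\}$ of $r$-element subsets of $X$, I would build a graph $G$ as follows. The vertex set consists of an independent set $I$ with one vertex for each element of $X$, together with a clique $K$ with one vertex $s_j$ for each set $C_j$; I join each $s_j$ to the $r$ vertices of $I$ corresponding to the elements of $C_j$, and leave $I$ independent. Because the set-vertices form a clique and the element-vertices form an independent set, $G$ is a \emph{split} graph, and every split graph is $2K_2$-free; since an induced $P_5=v_1v_2v_3v_4v_5$ contains the induced $2K_2$ on $\{v_1,v_2\}\cup\{v_4,v_5\}$, this already guarantees $G\in{\cal G}_5$, for free, no matter how the incidences are wired. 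Thus, in this approach, membership in ${\cal G}_5$ is the cheap part; the combinatorics of the partition is what must be engineered.

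The intended correspondence is that a selected set $C_j$ becomes the star with centre $s_j$ and the $r$ element-vertices of $C_j$ as leaves. If ${\cal C}$ admits an exact cover (necessarily of size $t$), these $t$ stars partition $I$ and consume exactly the $t$ selected set-vertices; the remaining clique vertices---the unselected set-vertices together with a padding clique of dummy vertices whose number I fix so that $|V(G)|$ is divisible by $r+1$---form a clique and hence split into copies of $K_{r+1}$, each of which contains a spanning $K_{1,r}$. Conversely, from a perfect $K_{1,r}$-partition in which every element-covering star is \emph{clean} (its centre is some $s_j$ and its $r$ leaves are exactly the elements of $C_j$) one reads off an exact cover, since each element then lies in exactly one such star and cleanliness forces the chosen sets to be pairwise disjoint and to cover $X$.

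The hard part, and the crux of the argument, is that the skeleton just described is \emph{vacuous} without further enforcement. Because the set-vertices form a clique, a centre $s_j$ may use a mixture of element-leaves and clique-leaves (a \emph{mixed} star), and an element-vertex---whose neighbours are all set-vertices in $K$---may itself act as a centre with clique leaves. Any assignment of each element to one containing set can then be completed to a perfect partition by absorbing the leftover clique vertices, so the naive construction admits a perfect packing whether or not an exact cover exists. The main obstacle is therefore to \textbf{rule out these cheating configurations}: to force every set to behave in an all-or-nothing manner and to forbid element-vertices from serving as centres, all while remaining inside ${\cal G}_5$.

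To do this I would attach to each set a small private ``switch'' gadget, fused into the clique so that split-ness (hence $P_5$-freeness) is preserved, whose only locally consistent completions force $s_j$ either to cover exactly its $r$ elements or to be absorbed internally, never a proper mixture. The delicate point---and the step I expect to be the real technical bottleneck---is the \emph{counting}: a per-set gadget cannot self-partition in both states, since switching $s_j$ between ``leaves the gadget'' and ``stays'' changes the internal vertex count by exactly $1$, which is incompatible with divisibility by $r+1$. The gadget must therefore shed or absorb this unit of slack against a shared, rigidly controlled pool, and the whole construction must be balanced by a global parity argument that simultaneously forbids mixed stars and element-centred stars. Designing a gadget that is rigid (admitting only the two intended states), compatible with the split structure, and correct uniformly for every fixed $r\ge 3$ is where essentially all of the work lies, whereas the $P_5$-freeness, by contrast, follows automatically from the split characterisation.
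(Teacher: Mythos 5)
There is a genuine gap: your construction stops exactly where the proof has to start. You correctly observe that the naive split-graph encoding is vacuous (the leftover clique absorbs any slack, so a perfect $K_{1,r}$-partition exists regardless of whether an exact cover does), but you then defer the entire enforcement mechanism to an unspecified ``switch gadget'' with a ``global parity argument,'' and you yourself flag its design as the unresolved bottleneck. A proof attempt that names the obstruction without exhibiting the gadget is not a proof; moreover, your plan to let unselected set-vertices self-partition into copies of $K_{r+1}$ inside a padded clique is precisely the source of the slack, so any fix must abandon that idea rather than decorate it.

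It is instructive to compare with how the paper closes exactly this hole, using three concrete devices and no per-set gadget. First, it reduces from {\sc $3$-Dimensional Matching} rather than exact cover by $r$-sets: each set-vertex in $X_1$ is adjacent to only its $3$ elements, and a single universal pad $W$ of $k(r-3)$ vertices (adjacent to all of $X_1$) supplies the remaining $r-3$ leaves of a ``clean'' star; keeping the element-degree at $3$ independently of $r$ is what makes the counting uniform for all $r\ge 3$. Second, instead of letting leftovers self-partition, it adds an absorber $X_2$ of $m-k$ clique vertices, each with $r-1$ \emph{private pendant} leaves in $Y$: since pendants have degree one, every $X_2$ vertex is forced to be a center, and its one remaining leaf must be an $X_1$ vertex, so the $m-k$ unselected set-vertices are absorbed exactly one per $X_2$-star. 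Third, the total vertex count is $m(r+1)$, so a partition has exactly $m$ stars; after the $m-k$ forced $X_2$-stars, cardinality bookkeeping leaves $k$ stars for $k$ remaining $X_1$ vertices and the independent set $V\cup W$, forcing exactly one $X_1$ vertex per star, hence exactly $3$ element-leaves per star (the degree bound), which is an exact cover. This kills mixed stars and element-centered stars purely by counting --- the rigidity you hoped a gadget would provide comes from pendant vertices plus degree constraints. (Your $P_5$-freeness argument via $2K_2$-freeness of split graphs is fine, and essentially matches the paper's Lemma~\ref{lem1}, but as you note that part was never the difficulty.)
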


We cannot replace ${\cal G}_5$ by ${\cal G}_4$ (unless ${\cal NP}={\cal P}$) due to the following assertion, whose proof is given in the Appendix.

\begin{theorem}\label{th:poly}
 Let $r \geq 3$ and $G\in {\cal G}_4$. We can find the maximal number of vertex-disjoint copies of $K_{1,r}$ in $G$ in polynomial time. 
\end{theorem}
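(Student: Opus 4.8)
The plan is to exploit the fact that $\mathcal{G}_4$ is exactly the class of \emph{cographs}: a graph has no induced path on four vertices if and only if it is built from single vertices by repeatedly taking disjoint unions and joins, and every such graph admits a \emph{cotree}, computable in linear time. Cographs are precisely the graphs of clique-width at most $2$, and from the cotree one reads off a clique-width $2$-expression in linear time (disjoint union as the union operation, join via a relabelling and an edge-creation operation, using two labels). I would give two routes, a short one through logic and a hands-on one through dynamic programming, and present the first as the actual proof.

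For the clean route, observe that the quantity to be computed is
\[
\max\{\,|S| : S\subseteq V(G),\ G[S]\ \text{is a vertex-disjoint union of copies of}\ K_{1,r}\,\}/(r+1),
\]
since every $K_{1,r}$-packing occupies exactly $r+1$ vertices per star. The predicate ``$G[S]$ is a disjoint union of copies of $K_{1,r}$'' is expressible in monadic second-order logic of the first kind ($\mathrm{MSO}_1$, quantifying over vertices and vertex sets with the adjacency relation) in the single free set variable $S$: because $r$ is a \emph{fixed} constant, one can say in first-order logic that every vertex of $S$ is either a \emph{centre}, having exactly $r$ neighbours inside $S$ each of which has it as its unique $S$-neighbour, or a \emph{leaf}, having exactly one $S$-neighbour, which is a centre. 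Maximizing $|S|$ subject to this is therefore an $\mathrm{LinEMSO}_1$ optimization problem. By the algorithmic metatheorem of Courcelle, Makowsky and Rotics, every such problem is solvable in time $f(\mathrm{cw},\varphi)\cdot n$ on a graph supplied together with a clique-width expression; with $\mathrm{cw}\le 2$ and the fixed formula above this is linear, and hence polynomial, in $n$. Note that one cannot replace clique-width by treewidth here, since complete graphs are cographs of unbounded treewidth.

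It is instructive to record the elementary alternative and to see where its difficulty lies, since this is where the genuine mathematical content sits. One sets up a dynamic program over the cotree, storing at each node $t$ a table for the cograph $G_t$ it represents. At a disjoint-union node no star can cross between the children, so the tables combine by a max-plus convolution over how the unused vertices are split. At a join node stars may use the new edges, and here the naive state ``maximum number of stars packed as a function of the number of vertices left free'' is \emph{insufficient}: a single star can be forced to take leaves on both sides. For $r=3$ the diamond $K_2\oplus\overline{K_2}$ contains a copy of $K_{1,3}$ (centre a vertex of the $K_2$, with leaves its $K_2$-partner and the two vertices of $\overline{K_2}$), yet no star lives inside either side and none is purely bipartite across the join (neither side can supply three leaves), so a state recording only free-vertex counts would wrongly report $0$. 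The fix is to enrich the state so that it records, for the free vertices of $G_t$, how many can serve as centres still needing $0,1,\dots,r$ further leaves from outside and how many can be donated as leaves; because $r$ is fixed this adds only an $O(1)$-dimensional, polynomially bounded parameter, and the join step becomes a small matching of cross-demands between the two children. The main obstacle on this route is precisely proving that the enriched profile is a \emph{sufficient statistic} — that two subgraphs with the same profile are interchangeable inside any join above them, and that the profiles combine correctly while accounting for the unavoidable mixed stars illustrated above. This is a finite but delicate exchange argument, which is exactly what the logic route bypasses; for that reason I would present the $\mathrm{MSO}_1$/clique-width argument as the proof and keep the dynamic program only as a self-contained alternative.
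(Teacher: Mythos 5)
Your primary (logic) route is built on the wrong predicate, and the error is fatal rather than cosmetic. The problem, here and throughout the paper (see the hardness construction, whose stars have centres in a clique), asks for vertex-disjoint copies of $K_{1,r}$ as \emph{subgraphs}. Your $\mathrm{MSO}_1$ formula --- every vertex of $S$ is a centre with exactly $r$ $S$-neighbours each having it as unique $S$-neighbour, or a leaf with exactly one $S$-neighbour --- forces $G[S]$ to be an \emph{induced} disjoint union of stars. These differ badly even on cographs: for $G=K_n$ (a cograph) the true answer is $\lfloor n/(r+1)\rfloor$, but no nonempty $S$ satisfies your predicate, since any two leaves of a star would be adjacent in $G[S]$; your algorithm would return $0$. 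Indeed your own diamond example contradicts your formula: the copy of $K_{1,3}$ you exhibit there has pairwise adjacent leaves, so its vertex set fails the predicate you propose to optimize. Nor can this be repaired inside $\mathrm{MSO}_1$: the subgraph version requires an assignment of leaves to centres, i.e., a degree-constrained spanning subgraph of $G[S]$, which is the classic $\mathrm{MSO}_2$ (not $\mathrm{MSO}_1$) territory --- for $r=1$ it is exactly perfect matching. Concretely, restricted to cliques the correct predicate ``$S$ is the vertex set of a partition into $r$-stars'' is equivalent to $(r+1)\mid |S|$, a modular cardinality condition; since all subsets of a clique of equal size are automorphic, an $\mathrm{MSO}_1$ formula cannot distinguish large sets by residue class, so \emph{no} $\mathrm{MSO}_1$ formula is equivalent to the correct predicate even when attention is restricted to ${\cal G}_4$. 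At minimum you would need counting MSO ($\mathrm{CMSO}_1$) together with a nontrivial definability argument that you have not supplied.

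Your dynamic-programming alternative is the viable direction, and it is close in spirit to what the paper actually does: the paper recurses on the cograph structure, summing over components at union nodes, and at a join $X\oplus Y$ it sidesteps a general table by a direct counting argument --- the optimum is $\lfloor a\rfloor+\lfloor b\rfloor$ or $\lfloor a\rfloor+\lfloor b\rfloor+1$ with $a=(ry-x)/(r^2-1)$ and $b=(rx-y)/(r^2-1)$, and the boundary case is resolved by a lemma that finds, in a cograph, $s$ vertices $S$ maximizing $|N[S]|$. Your sketch correctly identifies the crux (mixed stars across a join, so a free-vertex count is an insufficient state) but then explicitly defers the essential step --- that the enriched profile is a sufficient statistic and composes correctly at join nodes --- as ``a finite but delicate exchange argument.'' Since the logic route fails outright and the DP's key lemma is left unproved, the proposal as written does not establish the theorem; fleshing out the join-node analysis, as the paper does with its explicit counting and the $|N[S]|$-maximization lemma, is exactly the missing content.
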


\section{Terminology and Notation}

For a graph $G$, $V(G)$ ($E(G)$, respectively) denotes the vertex set (edge set, respectively) of $G$, $\Delta(G)$ denotes the maximum degree of $G$ and $n$ its number of vertices. For a vertex $u$ and a vertex set $X$ in $G$, $N(u)=\{v: uv\in E(G)\}$, $N[u] = N(u)\cup \{u\},$ $d(u)=|N(u)|$, $N_X(u) = N(u) \cap X$, $d_X(u)=|N_X(u)|$ and $G[X]$ is the subgraph of $G$ induced by $X$. We call $K_{1,r}$ an {\em $r$-star}. 
We say a star {\em intersects} a vertex set if the star uses a vertex in the set.
 We use $(G,k,r)$ to denote an instance of the $r$-star packing problem. If there are $k$ vertex-disjoint $r$-stars in $G$, we say $(G,k,r)$ is a {\sc Yes}-instance, and we write $G \in \star(k,r)$. 
 Given disjoint vertex sets $S,T$ and integers $s,r$, we say that $S$ has $s$ $r$-{\em stars in} $T$ if there are $s$ vertex-disjoint $r$-stars with centers in $S$ and leaves in $T$.

A \emph{parameterized problem} is a subset $L\subseteq \Sigma^* \times
\mathbb{N}$ over a finite alphabet $\Sigma$. A parameterized problem $L$ is
\emph{fixed-parameter tractable} if the membership of an instance
$(I,k)$ in $\Sigma^* \times \mathbb{N}$ can be decided in time
$f(k)|I|^{O(1)}$ where $f$ is a computable function of the
{\em parameter} $k$ only.
Given a parameterized problem $L$,
a \emph{kernelization of $L$} is a polynomial-time
algorithm that maps an instance $(x,k)$ to an instance $(x',k')$ (the
\emph{kernel}) such that $(x,k)\in L$ if and only if
$(x',k')\in L$ and $k'+|x'|\leq g(k)$ for some
function $g$. 
It is well-known that a decidable parameterized problem $L$ is fixed-parameter
tractable if and only if it has a kernel. Kernels of small size are of
main interest, due to applications.

\section{Proof of Theorem \ref{th:ker}}

Note that the $1$-star packing problem is the classic maximum matching problem and if $k = 1$, the $r$-star packing problem is equivalent to deciding whether $\Delta(G) \ge r$. 
Both of these problems can be solved in polynomial time. Henceforth, we assume $r,k>1$.

A vertex $u$ is called a \textit{small vertex} if $\max \{d(v):v\in N[u]\} < r$. A graph without a small vertex is a \textit{simplified graph}.

We now give two reduction rules for an instance $(G,k,r)$ of {\sc $K_{1,r}$-Packing}.

\begin{krule}\label{rule:small-vertices}
 If graph $G$ contains a small vertex $v$, then return the instance $(G - v, k,r)$.
\end{krule}

It is easy to observe that Reduction Rule~\ref{rule:small-vertices} can be applied in polynomial time. 

\begin{krule}\label{rule:obstacle-removal}
 Let $G = (V,E)$ be a graph and let $C, L$ be two vertex-disjoint subsets of $V$. The pair $(C, L)$ is called a \textit{constellation} if $G[C \cup L] \in \star(|C|,r)$ and there is no star $K_{1,r}$ intersecting $L$ in the graph $G[V \setminus C]$. If $(C, L)$ is a constellation, return the instance $(G[V\setminus (C\cup L)], k-|C|)$.
\end{krule}

It is easy to observe that Reduction Rule~\ref{rule:obstacle-removal} can be applied in polynomial time, provided we are given a suitable constellation.

\begin{lemma} Reduction~Rules~\ref{rule:small-vertices} and~\ref{rule:obstacle-removal} are safe.\end{lemma}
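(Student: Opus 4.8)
The plan is to verify both rules by proving, for each, that the original instance is a \textsc{Yes}-instance if and only if the returned instance is. Since a reduction rule is safe exactly when it preserves membership in $\star(\cdot,r)$, and since any induced subgraph of a $P_d$-free graph is again $P_d$-free, the class ${\cal G}_d$ is automatically preserved by both rules (both merely delete vertices); so the whole task reduces to an equivalence of \textsc{Yes}/\textsc{No} answers.

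For Reduction~Rule~\ref{rule:small-vertices}, the key observation I would establish is that a small vertex $v$ can belong to no copy of $K_{1,r}$. Indeed, if $v$ were the center of an $r$-star then $d(v)\ge r$, contradicting $d(v)<r$ (note $v\in N[v]$). If $v$ were a leaf of an $r$-star with center $c$, then $c\in N(v)$ and $d(c)\ge r$, contradicting $\max\{d(u):u\in N[v]\}<r$. Hence every family of vertex-disjoint $r$-stars in $G$ avoids $v$, so it is also a family in $G-v$, and conversely any such family in the subgraph $G-v$ is a family in $G$. Thus $G\in\star(k,r)$ iff $G-v\in\star(k,r)$, and the rule is safe.

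For Reduction~Rule~\ref{rule:obstacle-removal}, write $G'=G[V\setminus(C\cup L)]$. The backward direction is immediate: if $G'$ contains $k-|C|$ vertex-disjoint $r$-stars, then together with the $|C|$ vertex-disjoint $r$-stars guaranteed inside $G[C\cup L]$ by the definition of a constellation (the two families live on disjoint vertex sets), we obtain $k$ vertex-disjoint $r$-stars in $G$, so $G\in\star(k,r)$.

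The main work, and the step I expect to be the crux, is the forward direction: from $k$ vertex-disjoint $r$-stars $S_1,\dots,S_k$ in $G$ I must extract $k-|C|$ of them that avoid $C\cup L$ altogether. The engine is the second constellation condition, namely that no copy of $K_{1,r}$ in $G[V\setminus C]$ intersects $L$. I would argue that any star $S_i$ meeting $L$ must also meet $C$: otherwise $S_i$ would be a copy of $K_{1,r}$ contained in $G[V\setminus C]$ and intersecting $L$, contradicting the hypothesis. Consequently the stars touching $C\cup L$ are precisely the stars touching $C$. Since the $S_i$ are pairwise vertex-disjoint and the $|C|$ vertices of $C$ can be hit by at most $|C|$ of them, at most $|C|$ stars touch $C$, hence at most $|C|$ touch $C\cup L$. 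The remaining $\ge k-|C|$ stars lie entirely in $V\setminus(C\cup L)$ and are pairwise disjoint, witnessing $G'\in\star(k-|C|,r)$. This closes the equivalence and shows both rules are safe.
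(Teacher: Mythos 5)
Your proof is correct and takes essentially the same approach as the paper: the same observation that a small vertex can appear in no $r$-star (which you spell out in slightly more detail), and for Reduction Rule~\ref{rule:obstacle-removal} the same two-directional counting argument, combining the $|C|$ stars inside $G[C\cup L]$ for one direction and noting that every star meeting $L$ must meet $C$, so at most $|C|$ stars touch $C\cup L$, for the other. No gaps; nothing further is needed.
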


\begin{proof}
Clearly, a small vertex $v$ can not appear in any $r$-star.
Therefore Reduction Rule \ref{rule:small-vertices} is safe as $G$ and $G-v$ will contain the same number of $r$-stars.

To see that Reduction Rule~\ref{rule:obstacle-removal} is safe, it is sufficient to show
that $G \in \star(k,r)$ if and only if $G[V\setminus(C\cup L)] \in \star(k-|C|,r)$. 
On the one hand, if $G[V\setminus(C\cup L)] \in \star(k-|C|,r)$, the hypothesis $G[C \cup L] \in \star(|C|,r)$ implies $G \in \star(k,r)$. On the other hand, there are at most $|C|$ vertex-disjoint stars intersecting $C$. But by hypothesis, every star intersecting $L$ also intersects $C$. We deduce that there are at most $|C|$ stars intersecting $C \cup L$, and so if $G \in \star(k,r)$, there are at least $k-|C|$ stars in $G[V-(C\cup L)]$: $G[V\setminus(C\cup L)] \in \star(k-|C|,r)$.
\end{proof}

Note that as both rules modify a graph by deleting vertices, any graph $G'$ that is derived from a graph $G \in {\cal G}_d$ by an application of Rules~\ref{rule:small-vertices} or~\ref{rule:obstacle-removal} is also in ${\cal G}_d$.


Recall the Expansion Lemma, which is a generalization of the well-known Hall's theorem.

\begin{lemma}(\textbf{Expansion Lemma})\cite{FoLoMiPhSa11}
Let $r$ be a positive integer, and let $m$ be the size of the maximum matching in a bipartite graph $G$ with vertex bipartition
$X\cup Y$. If $|Y|>rm$, and there are no isolated vertices in $Y,$ then there exist nonempty vertex sets
$S\subseteq X, T \subseteq Y$ such that $S$ has $|S|$ $r$-stars in $T$ and no vertex in $T$ has a neighbor outside $S$. 
Furthermore, the sets $S, T$ can be found in polynomial time in the size of $G$.
\end{lemma}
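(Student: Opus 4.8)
The plan is to deduce the statement from Hall's theorem via an auxiliary graph that turns the search for $r$-stars into the search for an ordinary matching. I would replace each vertex $x\in X$ by $r$ identical copies, all adjacent to $N(x)$, obtaining a bipartite graph $H$ with parts $\widehat X$ and $Y$, where $|\widehat X|=r|X|$. A set $S\subseteq X$ has $|S|$ vertex-disjoint $r$-stars with leaves in a set $T\subseteq Y$ precisely when $H$ has a matching saturating all $r$ copies of every vertex of $S$ and covering exactly $T$ on the $Y$-side; thus ``$S$ has $|S|$ $r$-stars in $T$'' is the same as an $r$-\emph{expansion} of $S$ into $T$, and by the defect form of Hall's theorem such an expansion of a set $A$ into $Y$ exists if and only if $|N(B)|\ge r|B|$ for every $B\subseteq A$. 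The extra requirement $N(T)\subseteq S$ — that $T$ be \emph{closed}, using no neighbours outside $S$ — is what makes the statement stronger than a bare expansion, and will be the crux of the argument.

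First I would reduce to the cleaner hypothesis $|Y|\ge r|X|$. Since $Y$ has no isolated vertices, $N(X)=Y$, so the deficiency $d=\max_{A\subseteq X}(|A|-|N(A)|)$ satisfies $m=|X|-d$ by K\"onig's theorem. Let $A^\ast$ attain this maximum; the Dulmage--Mendelsohn structure guarantees that $G'=G[(X\setminus A^\ast)\cup(Y\setminus N(A^\ast))]$ has a matching saturating $X'=X\setminus A^\ast$, so its maximum matching has size $|X'|$. A short computation using $|Y|>rm$ and $d=|A^\ast|-|N(A^\ast)|$ then gives $|Y\setminus N(A^\ast)|>r|X'|+(r-1)|N(A^\ast)|\ge r|X'|$. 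Moreover every vertex of $Y\setminus N(A^\ast)$ keeps a neighbour in $X'$ (it had one in $X$, necessarily outside $A^\ast$), so $G'$ has no isolated vertices on its $Y$-side and, crucially, $N_G(T)=N_{G'}(T)$ for every $T\subseteq Y\setminus N(A^\ast)$; hence a solution for $G'$ lifts verbatim to $G$. This replaces the matching-based hypothesis $|Y|>rm$ by $|Y'|\ge r|X'|$.

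With $|Y|\ge r|X|$ in hand I would argue by induction on $|X|$. If the whole of $X$ has an $r$-expansion into $Y$ (equivalently $|N(B)|\ge r|B|$ for all $B\subseteq X$, checkable as a matching saturating $\widehat X$ in $H$), take $S=X$ and let $T$ be the $r|X|$ leaves used; then $N(T)\subseteq X=S$ holds trivially and we are done. Otherwise Hall's condition fails, so there is a nonempty $A\subseteq X$ with $|N(A)|<r|A|$; as $N(X)=Y$ and $|Y|\ge r|X|$ we have $A\neq X$. Deleting $A$ together with $N(A)$ yields a graph on $X\setminus A$ and $Y\setminus N(A)$ which again has no isolated vertices in its $Y$-side and satisfies $|Y\setminus N(A)|=|Y|-|N(A)|>r|X|-r|A|=r|X\setminus A|$; by induction it contains a closed pair $(S,T)$. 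Since the deleted vertices $N(A)$ are exactly the neighbours of $A$, no vertex of $Y\setminus N(A)$ — and in particular none in $T$ — has a neighbour in $A$, so $N_G(T)\subseteq X\setminus A$ and the closure condition $N_G(T)\subseteq S$ is inherited unchanged.

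The main obstacle, as flagged, is the closure requirement $N(T)\subseteq S$: a naive expansion of all of $X$ need not be closed, and merely shrinking $S$ can create leaves with outside neighbours. Peeling off a Hall-violating set $A$ is precisely what repairs this, because the vertices removed from $Y$ are exactly $N(A)$, so the recursion never strands a $T$-vertex with a neighbour back in the deleted part. For the effectiveness claim, every ingredient is polynomial: a maximum matching and its alternating-path (K\"onig) certificate identify both the deficiency set $A^\ast$ and any $r$-expansion or Hall-violator in $H$ in polynomial time, and the induction removes at least one vertex of $X$ per step, so it terminates after at most $|X|$ rounds. This produces $S$ and $T$ within a polynomial number of matching computations.
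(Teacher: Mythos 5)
Your proof is correct, but note that the paper itself offers no proof of this statement to compare against: the Expansion Lemma is imported verbatim from the cited reference of Fomin, Lokshtanov, Misra, Philip and Saurabh, and the paper's only original contribution in this vicinity is the \emph{Modified} Expansion Lemma, proved by repeatedly invoking the Expansion Lemma as a black box. Your argument is a sound, self-contained substitute. The two key steps both check out. First, the deficiency preprocessing is the one genuinely nonstandard ingredient, since the hypothesis here is the matching form $|Y|>rm$ rather than the more common $|Y|\ge r|X|$: with $d=|A^\ast|-|N(A^\ast)|$ maximal and $m=|X|-d$ (K\"onig--Ore), one gets $rm=r|X'|+r|N(A^\ast)|$, hence $|Y\setminus N(A^\ast)|>r|X'|+(r-1)|N(A^\ast)|\ge r|X'|$ as you claim; the maximality of $A^\ast$ does guarantee (via a short Hall computation, $|N(B)\setminus N(A^\ast)|\ge|B|$ for $B\subseteq X\setminus A^\ast$) a matching of $G'$ saturating $X'$, and since no vertex of $Y\setminus N(A^\ast)$ has a neighbour in $A^\ast$, both the no-isolated-vertices condition and the closure condition lift from $G'$ back to $G$ exactly as you say. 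Second, the Hall-violator peeling is the classical proof of the $|Y|\ge r|X|$ version: removing $A\cup N(A)$ with $|N(A)|<r|A|$ strictly preserves $|Y'|>r|X'|$, the case $A=X$ is excluded because $N(X)=Y$, and when Hall's condition finally holds in the $r$-fold copy graph $H$ you may take $S=X$, at which point closure $N(T)\subseteq S$ is trivial by bipartiteness --- this is precisely the right way to discharge the closure requirement, which as you observe is the crux. The effectiveness claims are also fine: the deficiency set, any Hall violator in $H$ (the alternating-reachability set in $H$ is closed under copies, so it projects to a violator in $G$), and each expansion are all obtainable from maximum-matching computations, and at most $|X|$ peeling rounds occur. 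Structurally, it is a pleasant coincidence that your peel-and-recurse scheme mirrors the paper's proof of its Modified Expansion Lemma, but the statements are different and your proof stands on its own.
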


Henceforth, we will use the following modified version of the expansion lemma.

\begin{lemma}(\textbf{Modified Expansion Lemma})
Let $r$ be a positive integer, and let $m$ be the size of the maximum matching in a bipartite graph $G$ with vertex bipartition $X\cup Y$. If $|Y|>rm$, and there are no isolated vertices in $Y,$ then there exists a polynomial algorithm(in the size of $G$) which returns a partition $X=A_1\cup B_1$, $Y=A_2\cup B_2$, such that 
$B_1$ has $|B_1|$ $r$-stars in $B_2$,
 $E(A_1,B_2)=\emptyset$, and $|A_2|\leq r|A_1|$.
\end{lemma}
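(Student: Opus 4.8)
The plan is to produce the partition by applying the (standard) Expansion Lemma \emph{iteratively}, peeling off one expanding pair $(S,T)$ at a time and accumulating these pairs into $B_1$ and $B_2$, until the remaining part of $Y$ becomes small relative to the remaining part of $X$. A single application does not suffice: it guarantees $E(X\setminus S,T)=\emptyset$, but $|Y\setminus T|$ may still be far larger than $r|X\setminus S|$, so the peeling must be repeated. Concretely, I would initialize $B_1=B_2=\emptyset$ and work with the \emph{active} graph $G[A_1\cup A_2]$, where $A_1=X\setminus B_1$ and $A_2=Y\setminus B_2$. While $|A_2|>rm'$, with $m'$ the size of a maximum matching of the active graph, I apply the Expansion Lemma to the active graph to get nonempty $S\subseteq A_1$, $T\subseteq A_2$ such that $S$ has $|S|$ $r$-stars in $T$ and no vertex of $T$ has a neighbour outside $S$ in the active graph. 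I then move $S$ into $B_1$ and $T$ into $B_2$.

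The complication is that deleting $S$ from the active side of $X$ may isolate some vertices of $A_2\setminus T$, and the Expansion Lemma requires $Y$ to have no isolated vertices. My resolution is to move every such newly isolated vertex into $B_2$ as well, and then repeat. Throughout the process I would maintain three invariants: (i) the active graph has no isolated vertex in its $Y$-part, which is restored at the end of each step precisely by evacuating the newly isolated vertices; (ii) $B_1$ has $|B_1|$ vertex-disjoint $r$-stars in $B_2$, because the stars produced in distinct iterations are vertex-disjoint and simply accumulate, while the evacuated isolated vertices only add unused leaves; and (iii) $N_G(B_2)\subseteq B_1$, equivalently $E(A_1,B_2)=\emptyset$.

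Invariant (iii) is where the argument needs care, and I expect it to be the main obstacle. For a set $T$ produced by the Expansion Lemma it holds because no vertex of $T$ has a neighbour outside $S$ in the active graph and its only other neighbours in $G$ lie in the already-removed set $B_1$; hence after adjoining $S$ to $B_1$ we have $N_G(T)\subseteq B_1\cup S$. For a newly isolated vertex $y$ it holds because, once $S$ is deleted, every neighbour of $y$ in $G$ lies in the current $B_1\cup S$ (its active neighbours were all in $S$, and its remaining neighbours were already in $B_1$). Verifying this is exactly what makes dumping the isolated vertices into $B_2$ legitimate rather than letting them spoil the hypothesis of the next Expansion Lemma call.

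For termination and the ratio bound, each iteration moves the nonempty set $S$ out of $A_1$, so the loop runs at most $|X|$ times and each step (an Expansion Lemma call plus locating newly isolated vertices) is polynomial, giving a polynomial algorithm. The initial application is valid because the hypotheses $|Y|>rm$ and ``no isolated vertex in $Y$'' are precisely invariant (i) and the loop condition at the start. When the loop halts we have $|A_2|\le rm'$, and since a matching of the active graph uses distinct vertices of $A_1$ we get $m'\le|A_1|$, whence $|A_2|\le r|A_1|$. Combined with invariants (ii) and (iii), this delivers all three required properties.
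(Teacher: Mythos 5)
Your proposal is correct and takes essentially the same approach as the paper's proof: iterated peeling via the Expansion Lemma, accumulating each pair $(S,T)$ into $(B_1,B_2)$, with newly isolated vertices on the $Y$-side evacuated into $B_2$ (the paper phrases the iteration as a recursion and moves them into $T$, which amounts to the same thing). The only cosmetic difference is your halting test $|A_2|\le rm'$ versus the paper's $|Y'|\le r|X'|$; both deliver $|A_2|\le r|A_1|$ since $m'\le |A_1|$.
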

\begin{proof}
If $|Y|\leq rm$, then we may return $A_1 = X$, $A_2=Y$, $B_1=B_2=\emptyset$, as $m \leq |X|$ and hence $|Y|\leq r|X|$.
Otherwise, apply the Expansion Lemma to get nonempty vertex sets
$S\subseteq X, T \subseteq Y$ such that $S$ has $|S|$ $r$-stars in $T$ and no vertex in $T$ has a neighbor in $Y$ outside $S$.
Let $X'=X\setminus S$ and $Y'=Y \setminus T$. If  $G[X' \cup Y']$ has isolated vertices in $Y'$, move all of them from $Y'$ to $T$. 
If $|Y'|\leq r|X'|$, we may return $A_1 = X'$, $A_2=Y'$, $B_1=S,$ and $B_2=T$.

So now assume $|Y'|> r|X'|$. 
In this case, apply the algorithm recursively on $G[X' \cup Y']$  to get a partition $X'=A_1'\cup B_1', Y' = A_2'\cup B_2'$, such that $B_1'$ has $|B_1'|$ stars in $B_2'$, $E(A_1',B_2')=\emptyset$, and $|A_2'|\leq r|A_1'|$.
Then return $A_1 = A_1'$, $B_1 = B_1' \cup S$, $A_2 = A_2'$, $B_2 = B_2' \cup T$. Observe that $B_1$ has $|B_1'|+|S| = |B_1|$ stars in $B_2$, $E(A_1, B_2) \subseteq E(A_1',B_2') \cup E(X\setminus S,T) = \emptyset$, and $|A_2| = |A_2'| \leq r|A_1'| = r|A_1|$, as required.
As each iteration reduces $|X|$ by at least $1$, we will have to apply less than $|X|+|Y|$ iterations, each of which uses at most one application of the Expansion Lemma, and so the algorithm runs in polynomial time.
\end{proof}




\noindent{\bf Proof of Theorem \ref{th:ker}.}
By exhaustively applying Reduction Rule~\ref{rule:small-vertices}, we may assume we have a simplified graph.
Let $G$ be a simplified graph in ${\cal G}_d$.
Now find a maximal $r$-star packing of the graph $G$ with $q$ stars.
We may assume $q < k$ as otherwise we have a trivial {\sc Yes}-instance.
Let $S$ be the set of vertices in this packing, and let $D = V(G)\setminus S$.

For any $u \in D$, let $D[u]$ be
the set of vertices $v \in D$ for which there is a path from $v$ to $u$ using only vertices in $D$ - 
that is, $D[u]$ is the the set of vertices in the component of $G[D]$ containing $u$.
As our star-packing is maximal, $d_D(v) < r$ for every $v \in D$.
As $G \in {\cal G}_d$, 
every $v \in D[u]$ has a path to $u$ in $G[D]$ with at most $d-1$ vertices
(as otherwise the shortest path in $G[D]$ from $v$ to $u$ is an induced path on at least $d$ vertices).
It follows that $|D[u]| \le 1 + r + r^2 + \dots + r^{d-1} \le r^{d}$.


We will now find a partition of $S$ into $Big(S) \cup Small(S)$, and $D$ into $B(D) \cup U(D)$, such that $|B(D)| \le r^{d+1}|Small(S)|$, and either $Big(S)=U(D) = \emptyset$ or $(Big(S), U(D))$ is a constellation. 
As $|Small(S)|\le |S| \le (r+1)k$, it follows that either $|V(G)|\le (r+1)k + (r+1)r^{d+1}k$, or we can apply Reduction Rule \ref{rule:obstacle-removal} on $(Big(S), U(D))$.



%


We will construct $Big(S), Small(S),B(D),U(D)$ algorithmically as described below. 
Throughout, we will preserve the properties that %
\begin{enumerate}
 \item $|B(D)|\le|Small(S)|r^{d+1}$, 
 \item $U(D)$ has no neighbors in $Small(S) \cup B(D)$.
\end{enumerate}

Initially, set $Big(S)=S, U(D)=D, Small(S)=B(D)=\emptyset$.


 
While $|U(D) \cap N(Big(S))| > r|Big(S)|$, do the following.
 
 If there is a vertex $u \in Big(S)$ such that $|N(u)\cap U(D)| < r$, let $X = \bigcup\{D[v]: v \in N(u) \cap U(D)\}$. Observe that as $|D[v]| \le r^{d}$ for all $v \in D$, $|X|< r^{d+1}$.
 Now set $Small(S) = Small(S) \cup \{u\}$, $Big(S)=Big(S)\setminus \{u\}$, $B(D) = B(D) \cup X$, $U(D) = U(D)\setminus X$. 
 It follows that Property 1 is preserved.
 Note that no vertex in the new $U(D)$ has a neighbor in $X$  (as all neighbors of $X$ in $D$ lie in $X$).
 Similarly no vertex in the new $U(D)$ is adjacent to $u$ (as such a vertex would be in the old $U(D)$ and so would have been added to $X$).
 Therefore there are still no edges between the new $U(D)$ and the new $Small(S) \cup B(D)$,
 and so Property 2 is preserved.


Otherwise (if every vertex $u \in Big(S)$ has $|N(u)\cap U(D)| \ge r$), let $H$ denote the maximal bipartite subgraph of $G$ with vertex partition $Big(S)\cup (U(D) \cap N(Big(S))$,
and apply the Modified Expansion Lemma to $H$. We will get a partition $Big(S) = A_1 \cup B_1$ and $U(D) \cap N(Big(S)) = A_2 \cup B_2$ such that
 $E(A_1,B_2) = \emptyset, |A_2| \le r|A_1|$ 
 and $B_1$ has $|B_1|$ $r$-stars in $B_2$.

If the Modified Expansion Lemma returns $B_1=Big(S)$, then we claim that $(Big(S), U(D))$ is a constellation.
To see this, firstly note that $|Big(S)|$ has $|Big(S)|$ $r$-stars in $U(D)$.
Secondly, note that since we chose the vertices of a maximal star packing for $S$, there is no $r$-star contained in $G[U(D)]$.
As $U(D)$ has no neighbors in $Small(S) \cup B(D)$, it follows that there is no $r$-star intersecting $U(D)$ in $G \setminus Big(S)$.
Thus $(Big(S), U(D))$ is a constellation, and the claim is proved.
In this case the algorithm stops.




So now assume that the Modified Expansion Lemma returns $Big(S)=A_1 \cup B_1$ with $A_1 \neq \emptyset$.
Let $X = \bigcup \{D[v]: v \in N(A_1) \cap U(D)\}$.
Note that as $E(A_1,B_2) = \emptyset$ and $|A_2|\le r|A_1|$,
we have $|X| \le |\bigcup \{D[v]: v \in A_2\}| \le |A_2|r^{d} \le |A_1|r^{d+1}$.
Then let $Small(S) = Small(S) \cup A_1$, $Big(S) = Big(S) \setminus A_1$, $B(D) = B(D) \cup X$, $U(D) = U(D) \setminus X$.
Note that after this move, we still have that $|B(D)|\le |Small(S)|r^{d+1}$, 
and $U(D)$ has no neighbors in $Small(S) \cup B(D)$.



Note that in either case, $|Big(S)|$ strictly decreases, so the algorithm must eventually terminate, 
either because $(Big(S), U(D))$ is a constellation, or because $|U(D) \cap N(Big(S))| \leq r|Big(S)|$.
If $(Big(S), U(D))$ is a constellation,
apply Reduction Rule~\ref{rule:obstacle-removal} using $(Big(S), U(D))$. This gives us a partition in which $Big(S)=U(D)=\emptyset$.
Thus in either case, we have that $|U(D) \cap N(Big(S))| \leq r|Big(S)|$.
Note that every vertex $u \in U(D)$ is in $D[v]$ for some $v \in N(S)$ (as otherwise, either $\max \{d(v):v\in N[u]\} < r$ or $G[D]$ contains an $r$-star, a contradiction in either case). Moreover such a $v$ must be in $U(D) \cap N(Big(S))$, as there are no edges between $U(D)$ and $Small(S) \cup B(D)$.
Thus $|U(D)|\le r^d|U(D) \cap N(Big(S))| \leq r^{d+1}|Big(S)|$.
Then we have $|V(G)|=|S|+|U(D)|+|B(D)| \leq |S| + r^{d+1}|Big(S)| + r^{d+1}|Small(S)| \le (r^{d+1}+1)|S| \le (k-1)(r+1)(r^{d+1}+1) = O(k)$.



\section{Proof of Theorem \ref{th:NP}}

A {\em split graph} is a graph where the vertex set can be partitioned into a clique and an independent set.

An instance of the well-known ${\cal  NP}$-hard problem {\sc $3$-Dimensional Matching} contains a vertex set that can be partitioned into three equally large sets $V_1,V_2,V_3$ (also called partite sets). 
Let $k$ denote the size of each of $V_1,V_2,V_3$.
It furthermore contains a number of $3$-sets containing exactly one vertex from each $V_i$, $i=1,2,3$.
The problem is to decide if there exists a set of $k$ vertex disjoint $3$-sets (which would then cover all vertices).
Such a set of $k$ vertex disjoint $3$-sets is called a perfect matching.
The $3$-sets are also called {\em edges} (or {\em hyperedges}).

\begin{theorem} \label{thm1}
Let $r \geq 3$. It is ${\cal NP}$-hard to decide if the vertex set of a split graph can be partitioned into vertex disjoint copies of $K_{1,r}$.
\end{theorem}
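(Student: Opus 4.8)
The plan is to reduce from {\sc $3$-Dimensional Matching} (3DM), as set up above, and to build, from a 3DM instance with partite sets $V_1,V_2,V_3$ of size $k$ and triple set $T$ with $|T|=m$, a split graph $G$ whose vertex set admits a partition into copies of $K_{1,r}$ if and only if the instance has a perfect matching. I may assume a bounded-occurrence version of 3DM in which every element lies in at most three triples (still ${\cal NP}$-hard), so that element-vertices will have low degree. The guiding principle is to place the \emph{rigid} side into the independent set, where degrees are controllable. Concretely, for each triple $e\in T$ I introduce a vertex $t_e$ in the independent set whose neighbours are the three element-vertices of $e$ together with $r-3$ private padding vertices (so $t_e$ has exactly $r$ such neighbours); the element-vertices and the padding vertices form most of the clique. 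The point of degree-exactly-$r$ is an all-or-nothing effect: if $t_e$ is used as the centre of a star it is forced to use all of its neighbours as leaves, hence to cover all three elements of $e$ at once. Thus the triples whose vertices are centres automatically cover their elements fully and are pairwise vertex-disjoint on the element side.

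To dispose of the $m-k$ unchosen triples I add to the clique a family of $p$ \emph{absorber} vertices, each turned into a forced centre by attaching to it a private degree-$1$ pendant in the independent set (a pendant cannot be a centre, so its unique neighbour must be). The absorbers are made adjacent to every $t_e$, so an unchosen $t_e$ can be adopted as a leaf of an absorber star; crucially, because absorbers are forced centres they are never available as leaves, so their presence does not spoil the all-or-nothing behaviour of the triple-centres. Finally I pad the clique with a bounded number of dummy vertices so that $|V(G)|\equiv 0 \pmod{r+1}$. The easy direction is then routine: given a perfect matching $M$, take for each $e\in M$ the star centred at $t_e$ on its three elements and its padding, distribute the unchosen triple-vertices and all remaining padding as leaves of the absorber stars, and let the dummies fill out the last stars.

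The hard direction --- recovering a perfect matching from an arbitrary $K_{1,r}$-partition of $G$ --- is where I expect the real work to be, and it is governed by a single threat: the clique may try to ``cover itself''. Because all element-vertices and absorbers lie in one clique, a priori an element could be covered by an absorber star, or (when $r=3$) by an element acting as a centre, rather than by its intended triple; such spurious covering could in principle let a partition exist with no matching behind it. The bounded-occurrence assumption already forbids element-centres when $r>3$ (element degree $<r$), and the $r=3$ borderline can be closed with a per-element forced-centre guard. The main obstacle is to rule out all remaining clique-internal coverings, and the plan is a tightness argument: choose $p$ and the number of dummies so that the total leaf-capacity of the absorbers is exactly the amount needed to dispose of the unchosen triples when precisely $k$ triples are selected, with no spare capacity. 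Then any partition selecting fewer than $k$ triple-centres leaves at least three elements that must be covered inside the clique, over-subscribing the absorber capacity --- a contradiction; so exactly $k$ pairwise-disjoint triples are fully covered, yielding a perfect matching. Verifying that this accounting is genuinely tight, and that no alternative internal configuration escapes it, is the crux. Since split graphs contain no induced $P_5$, a successful construction also establishes Theorem~\ref{th:NP}.
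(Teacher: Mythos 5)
There is a genuine gap, and it is exactly where you flagged ``the crux'': your tightness accounting does not rule out clique-internal covering, and your rigidity claims about elements are false. You inverted the placement that makes the paper's reduction work: the paper puts the \emph{triple} vertices in the clique and the elements (together with the padding set $W$ and the pendants $Y$) in the \emph{independent set}, so that an element's only neighbours are the triple vertices of the triples containing it and it can never be a center. In your construction the elements sit in the clique, so each element has degree at least the clique size minus one; the assertion that bounded occurrence ``already forbids element-centres when $r>3$ (element degree $<r$)'' is therefore simply wrong, and nothing stops stars centered at elements (or at padding or dummy vertices) from mopping up leftover clique vertices and leftover $t_e$'s. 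Your counting argument does not repair this, because elements left uncovered by triple-stars need not consume any absorber capacity at all: they can be covered entirely inside the clique. (Your $r=3$ ``per-element forced-centre guard'' is also incoherent as stated: forcing every element to be a center would make it impossible for any $t_e$-star to use elements as leaves.)

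Here is a concrete counterexample to your hard direction. Take $r=3$, $k=2$, parts $\{a_1,a_2\},\{b_1,b_2\},\{c_1,c_2\}$, and triples $e_1=(a_1,b_1,c_1)$, $e_2=(a_1,b_2,c_2)$; there is no perfect matching, since both triples contain $a_1$. Here $m-k=0$, so your ``no spare capacity'' prescription gives $p=q=0$ (and with $r=3$ there is no padding), leaving an $8$-vertex graph: a clique on the six elements, plus $t_{e_1},t_{e_2}$ in the independent set, each adjacent to its three elements. This graph \emph{does} partition into two copies of $K_{1,3}$: the star with center $t_{e_1}$ and leaves $a_1,b_1,c_1$, and the star with center $b_2$ and leaves $a_2,c_2,t_{e_2}$ --- all adjacencies are present, since $a_2,c_2$ are clique neighbours of $b_2$ and $b_2\in e_2$. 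So your reduction maps a NO instance of \textsc{3-Dimensional Matching} to a YES instance; adding absorbers and dummies does not help, since absorber stars can absorb their pendant plus dummies while the same element-centered cheat persists. By contrast, in the paper's construction every remaining star after the forced $X_2$-centers must contain exactly one clique vertex from $X_1$ and hence be centered there with exactly three element leaves, which is what extracts the matching; that forcing is unavailable once elements live in the clique.
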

\begin{proof}
We will reduce from {\sc $3$-Dimensional Matching}. Let ${\cal I}$ be an instance of $3$-dimensional matching. 
Let $V_1,V_2,V_3$ denote the three partite sets of ${\cal I}$ and let $E$ denote the set of
edges in ${\cal I}$. Let $m = |E|$ and $k=|V_1|=|V_2|=|V_3|$.
We will build a split graph $G_{\cal I}$ as follows. 
Let $V = V_1 \cup V_2 \cup V_3$ be the vertices of ${\cal I}$.
Let $X_1$ be a set of $m$ vertices and $X_2$ be a set of $m-k$ vertices and let $X = X_1 \cup X_2$.
Let $Y$ be a set of $(m-k)(r-1)$ vertices and let $W$ be a set of $k(r-3)$ vertices (if $r=3$ then $W$ is empty).
Let the vertex set of $G_{\cal I}$ be $V \cup X \cup Y \cup W$.

Add edges such that $X$ becomes a clique in $G_{\cal I}$. Let each vertex in $X_1$ correspond to a distinct edge in $E$ and connect that vertex with the $3$ vertices in $V$ which 
belongs to the corresponding edge in $E$. Furthermore add all edges from $X_1$ to $W$. Finally, for each vertex in $X_2$ add $r-1$ edges to $Y$ in such a way that 
each vertex in $Y$ ends up with degree one in $G_{\cal I}$. This completes the construction of $G_{\cal I}$.

Clearly $G_{\cal I}$ is a split graph as $X$ is a clique and $V \cup Y \cup W$ is an independent set. We will now show that the vertex set of $G_{\cal I}$ can be partitioned into 
vertex disjoint copies of $K_{1,r}$ if and only if ${\cal I}$ has a perfect matching.

First assume that ${\cal I}$ has a perfect matching. Let $E' \subseteq E$ denote the edges of the perfect matching. For the vertices in $X_1$ that correspond to the edges in $E'$
we include the three edges from each such vertex to $V$ as well as $r-3$ edges to $W$.  This can be done such that we obtain $k$ vertex disjoint copies of $K_{1,r}$ covering all of $V$ and $W$ as well as
$k$ vertices from $X_1$. Now for each vertex in $X_2$ include the $r-1$ edges to $Y$ as well as one edge to an unused vertex in $X_1$. This can be done such that 
we obtain an additional $m-k$ vertex disjoint copies of $K_{1,r}$. We have now constructed $m$ vertex disjoint copies of $K_{1,r}$ which covers all the vertices in $G_{\cal I}$, as required.

Now assume that the vertex set of $G_{\cal I}$ can be partitioned into vertex disjoint copies of $K_{1,r}$. 
As $|V \cup W \cup Y \cup X| = m(r+1)$ we note that we have $m$ vertex disjoint copies of $K_{1,r}$, which we will denote by ${\cal K}$.
As all vertices in $Y$ need to be included in such copies we note that 
every vertex of $X_2$ is the center vertex of a $K_{1,r}$. Let ${\cal K}'$ denote these $m-k$ copies of $K_{1,r}$.
Each $K_{1,r}$ in ${\cal K}'$ must include $1$ edge from $X_2$ to $X_1$. These $m-k$ edges form a matching, implying
that $m-k$ vertices of $X_1$ also belong to the copies of $K_{1,r}$ in ${\cal K}'$. 
This leaves $k$ vertices in $X_1$ that are uncovered and $rk$ vertices in $V \cup W$ that are uncovered.
Furthermore, as $V \cup W$ is an independent set, each copy of $K_{1,r}$ in ${\cal K} \setminus {\cal K}'$ must contain a vertex of $X_1$.
As $|{\cal K} \setminus {\cal K}'| = k$ we note that 
the $k$ copies of $K_{1,r}$  in  ${\cal K} \setminus {\cal K}'$ must include exactly one vertex from $X_1$. 
Also as each vertex in $X_1$ has exactly three neighbours in $V$, each such
$K_{1,r}$ also contains $3$ vertices from $V$ (as $V$ needs to be covered) and therefore $r-3$ vertices form $W$.  
Therefore the $k$ vertices in $X_1$ that belong to copies of $K_{1,r}$ in ${\cal K} \setminus {\cal K}'$ correspond
to $k$ edges in $E$  which form a perfect matching in $G_{\cal I}$.

This completes the proof as we have shown that $G_{\cal I}$ can be partitioned into
vertex disjoint copies of $K_{1,r}$ if and only if ${\cal I}$ has a perfect matching.
\end{proof}

The following lemma is known. We give the simple proof for completeness.

\begin{lemma} \label{lem1}
No split graph contains an induced path on 5 vertices.
\end{lemma}
\begin{proof}
Assume $G$ is a split graph where $V(G)$ is partitioned into an independent set $I$ and a clique $C$.
For the sake of contradiction assume that $P= p_0 p_1 p_2 p_3 p_4$ is an induced $P_5$ in $G$. 
As $I$ is independent we note that $\{p_0,p_1\} \cap C \not= \emptyset$ and 
$\{p_3,p_4\} \cap C \not= \emptyset$. As $C$ is a clique there is therefore an edge from 
a vertex in $\{p_0,p_1\}$ to a vertex in $\{p_3,p_4\}$. This edge implies that $P$ is not an induced $P_5$ in $G$, a contradiction.
\end{proof}

\noindent{\bf Proof of Theorem \ref{th:NP}.}
By Lemma~\ref{lem1}, ${\cal G}_5$ contains all split graphs.
The result now follows immediately from Theorem~\ref{thm1}.~\qed

\vspace{3mm}

\noindent{\bf Acknowledgment.} Research of GG was partially supported by Royal Society Wolfson Research Merit Award. Research of BS was supported by China Scholarship Council.

\newpage

\appendix

\section{Proof of Theorem \ref{th:poly} }

Note that ${\cal G}_4$ is the family of {\em cographs} \cite{BLS99}. It is well-known \cite{BLS99} that any non-trivial (i.e., with at least two vertices) cograph
$G$ is either disconnected or its complement is disconnected. Below let $n$ denote the order of $G$ and let $m$ denote the size of $G$.
The following lemma is well-known.

\begin{lemma}\label{lem:Comp}
For any graph $G$, we can in time 
$O(n^2)$ find the connected components of $G$ and the connected components of the complement of $G$.
\end{lemma}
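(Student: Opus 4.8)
The plan is to handle the two tasks separately, relying on the fact that the target bound $O(n^2)$ is generous enough to absorb the worst-case density of both $G$ and its complement $\overline{G}$. For the connected components of $G$ itself, I would simply run a breadth-first (or depth-first) search from each not-yet-visited vertex, labelling all reachable vertices with a common component index. Standard analysis gives running time $O(n+m)$, and since $m \le \binom{n}{2} = O(n^2)$ this is within budget. The only data-structure requirement is constant-time adjacency testing and neighbour enumeration, both of which an adjacency matrix of $G$ (computable in $O(n^2)$ time from any reasonable input representation) provides.

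The harder-looking half is computing the components of $\overline{G}$, because $\overline{G}$ may have $\Theta(n^2)$ edges even when $G$ is sparse, so explicitly searching its adjacency lists could seem wasteful. My plan is to avoid ever materialising $\overline{G}$: using the adjacency matrix $A$ of $G$, I run a graph search in which $u$ and $v$ are treated as adjacent precisely when $u \neq v$ and $A[u][v] = 0$. Concretely, I maintain a set $U$ of vertices not yet assigned to any $\overline{G}$-component. To grow a component from a seed vertex placed in a queue, whenever a vertex $u$ is dequeued I scan the current set $U$ and, for each $v \in U$ with $A[u][v]=0$, move $v$ out of $U$, assign it the current component label, and enqueue it. Each such scan costs $O(n)$, and since at most $n$ vertices are ever dequeued, the total cost is $O(n^2)$, matching the claim.

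The single point that needs care is arguing the search is correct despite $U$ shrinking as we go: a vertex $v$ that is a $\overline{G}$-neighbour of some already-processed vertex is placed into the component the first time such a neighbour is dequeued, and it is then removed from $U$, so no component is split and none is merged incorrectly; this is the usual invariant for search-based component labelling, now read off the matrix rather than an edge list. I expect no genuine obstacle here: the $O(n^2)$ budget is loose, and the only thing to verify is that the per-vertex scan over $U$ is charged once per dequeued vertex. I would remark, without needing it for the stated bound, that a standard amortisation (charging the ``wasted'' comparisons with $A[u][v]=1$ to the corresponding edge of $G$) sharpens the complement search to $O(n+m)$, but for Lemma~\ref{lem:Comp} the matrix scan already suffices.
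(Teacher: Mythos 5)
Your proof is correct: the paper states Lemma~\ref{lem:Comp} without any proof, dismissing it as well-known, and your argument (matrix-based BFS for $G$, plus a complement search that keeps an unassigned set $U$ and charges an $O(n)$ scan of $U$ to each dequeued vertex) is exactly the standard folklore argument the authors are implicitly invoking. Both halves, including the correctness invariant for the shrinking set $U$, are handled properly, so there is nothing to fix.
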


\begin{lemma}\label{lem:Max} 
For any $G \in {\cal G}_4$ and any $s \geq 1$ we can in time
$O(n^2)$   find a set of $s$ vertices, say $S$, in $G$ such that $|N[S]|$ is maximum possible.
\end{lemma}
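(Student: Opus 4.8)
The plan is to solve the problem recursively along the cograph structure of $G$, exploiting the two-way split guaranteed by the appendix fact (any non-trivial cograph is disconnected or has disconnected complement) together with Lemma~\ref{lem:Comp}. For a graph $H$ and an integer $a$ with $0\le a\le|V(H)|$, write $f_H(a)$ for the maximum of $|N[S]|$ over all $S\subseteq V(H)$ with $|S|=a$, setting $f_H(0)=0$; the quantity we seek is $f_G(s)$ together with a witnessing set (we may assume $s\le n$). First I would record the key observation for the join case: if $\overline{H}$ is disconnected with components $C_1,\dots,C_t$ (so $t\ge 2$ and $H$ is the join of the $H[C_i]$), then choosing one vertex $u\in C_1$ and one vertex $v\in C_2$ already gives $N[\{u,v\}]=V(H)$, since any $w\in C_\ell$ is adjacent in $H$ to the chosen vertex of whichever of $C_1,C_2$ has index different from $\ell$ (or is itself chosen). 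Hence for any connected cograph $H$ on at least two vertices, $f_H(a)=|V(H)|$ for every $a\ge 2$, while $f_H(1)=1+\Delta(H)$ and $f_H(0)=0$.

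The second ingredient is additivity in the disconnected case: if $G$ has connected components $C_1,\dots,C_t$, then $N[S]$ is the disjoint union of the sets $N[S\cap C_i]$, so $f_G(s)=\max\{\sum_i f_{G[C_i]}(a_i): a_i\ge 0,\ \sum_i a_i=s\}$. Combining the two observations yields a short algorithm. Compute the connected components of $G$ using Lemma~\ref{lem:Comp}. If $G$ is connected, return the closed neighbourhood of a maximum-degree vertex when $s=1$ (value $1+\Delta(G)$), and otherwise (when $s\ge 2$, using that a connected non-trivial cograph has disconnected complement) return any two vertices taken from two distinct components of $\overline{G}$, padded arbitrarily to size $s$, with value $n$. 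If $G$ is disconnected, each component $C_i$ is itself a connected cograph, so $f_{G[C_i]}$ is exactly the explicit step function above, determined by $|C_i|$ and $\Delta(G[C_i])$; I would then obtain $f_G(s)$ by a max-plus (knapsack-style) convolution of these $t$ step functions, tracking choices to recover $S$.

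For the running time, finding the components of $G$, and of $\overline{G}$ in the connected case, costs $O(n^2)$ by Lemma~\ref{lem:Comp}, and reading off each $\Delta(G[C_i])$ is $O(n^2)$ overall since a vertex's degree inside its component equals its degree in $G$. The only remaining cost is the convolution: maintaining a table $T[j]$ equal to the best value achievable using $j$ vertices among the components processed so far, and updating it with component $C_i$, costs $O(n\,|C_i|)$ per component, for a total of $O(n\sum_i |C_i|)=O(n^2)$; reconstruction of $S$ by backtracking adds no asymptotic overhead. The step I expect to be the crux is the join observation, namely that two vertices in distinct parts already dominate the whole graph: this is precisely what collapses each connected component's contribution to a three-value step function, lets us avoid recursing deeper into the cotree, and thereby keeps the whole computation within the claimed $O(n^2)$ bound.
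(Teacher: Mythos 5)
Your proposal is correct and takes essentially the same approach as the paper: both rest on the observation that each connected component $C_i$ of a cograph has disconnected complement, so a single vertex contributes at most $m_i+1$ to $|N[S]|$ while two vertices (one from each side of the join) dominate all of $C_i$, reducing the problem to combining per-component step functions. The only difference is the aggregation step --- the paper greedily takes the $s$ largest marginal gains from the multiset $\{m_i+1\} \cup \{|V(C_i)|-m_i-1\}$ (valid because $m_i \geq |V(C_i)|/2$ makes the gains concave), whereas you use a knapsack-style max-plus convolution, which is equally correct and also within $O(n^2)$.
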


\begin{proof}
 Let $C_1,C_2,\ldots,C_l$ be the connected components of $G$ ($l \geq 1$). Assume first that all the components are non-trivial. 
 As any induced subgraph of a cograph is also a cograph we note that 
 the complement of each $C_i$ is disconnected.
 Therefore for each $i=1,2,\ldots,l$ there exists a non-trivial (each part is non-empty) 
 partition $(X_i,Y_i)$ of $V(C_i)$ such that all edges exist between $X_i$ and $Y_i$ in $G$.
 Let $m_i$ be maximum degree of a vertex in $C_i$ for each $i=1,2,\ldots,l.$
 
 The maximum number of vertices we can add to $N[S]$ by adding one vertex from $C_i$ is $m_i+1$ and the maximum number of vertices added to 
$N[S]$ by adding two vertices from $C_i$ is $|V(C_i)|$ as we can add a vertex from $X_i$ and one from $Y_i$.
Therefore the maximum possible $|N[S]|$ is the sum of the $s$ largest numbers in the set $m_1+1,m_2+1,\ldots,m_s+1,(|V(C_1)|-m_1-1), 
(|V(C_2)|-m_2-1),\ldots , (|V(C_l)|-m_l-1)$.
Furthermore it is easy to find the actual set $S$.\\
It is not hard to modify the proof above for the case when some $C_i$'s are trivial.
\end{proof}

Now we are ready to prove the main result of this appendix.\\

\noindent{\bf Proof of Theorem \ref{th:poly}:}
Let $G \in {\cal G}_4$ and let $r \geq 3$ be arbitrary.
First assume that $G$ is connected, which implies that the complement of $G$ is disconnected.
Let $X$ and $Y$ partition $V(G)$ such that all edges exist between $X$ and $Y$ in $G$.
We now consider two cases.\\

\noindent{\bf Case 1: } $|X| > r|Y|$ or $|Y| > r|X|$. Without loss of generality, assume that $|X| > r|Y|$.
In this case we recursively find the maximum number of $r$-stars we can pack into $G[X]$.
Let $m_x$ be the maximum number of $r$-stars in $G[X]$. 
If $(r+1)m_x + (r+1)|Y| \leq n$, then the optimal answer is that we can pack $m_x + |Y|$
$r$-stars into $G$ as we can always find $|Y|$ $r$-stars with centers in $Y$ and not touching the
$m_x$ $r$-stars we already found in $G[X]$. 
If  $(r+1)m_x + (r+1)|Y| > n$, then the optimal solution is $\lfloor n/(r+1) \rfloor$ $r$-stars as
we can pick $|Y|$ $r$-stars touching as few of the $m_x$ $r$-stars in $G[X]$ as possible and then pick as many of the 
$m_x$ $r$-stars that are left untouched. This completes this case.\\

\noindent{\bf Case 2: } $|X| \leq r|Y|$ and $|Y| \leq r|X|$. Let $x = |X|$ and $y=|Y|$ and define $a$ and $b$ as follows:

\[
a = \frac{ry-x}{r^2-1} \mbox{    and    } b = \frac{rx-y}{r^2-1}
\]

Let $a'=\lfloor a \rfloor = a - \epsilon_a $ and $b' = \lfloor b \rfloor = b - \epsilon_b$. 
We will first show that we can find $a'+b'$ $r$-stars such that 
$a'$ of the $r$-stars have the center in $X$ and all leaves in $Y$ and $b'$ of the 
$r$-stars have the center in $Y$ and all leaves in $X$. 
This is possible due to the following:

\[
\begin{array}{rcl}
a'r + b' & = & (a - \epsilon_a) r + (b-\epsilon_b) \\ 
 & = & r \frac{ry-x}{r^2-1} + \frac{rx-y}{r^2-1} - (r \epsilon_a + \epsilon_b) \\
 & = & y - (r \epsilon_a + \epsilon_b) \\
\end{array}
\]

And, analogously,

\[
b'r + a' =  x - (r \epsilon_b + \epsilon_a) \\
\]

As $0 \leq \epsilon_a < 1$ and $0 \leq \epsilon_b < 1$ we note that we cover all vertices in $G$ except $r \epsilon_a + \epsilon_b + r \epsilon_b + \epsilon_a 
= (r+1)(\epsilon_a+\epsilon_b)$. Therefore the number of vertices we cannot cover by the $r$-stars above is strictly less than $2(r+1)$.
If $(r+1)(\epsilon_a+\epsilon_b) < r+1$ then we have an optimal solution (covering all vertices except at most $r$), 
so assume that $(r+1)(\epsilon_a+\epsilon_b) \geq r+1$.

Clearly the optimal solution is either $a'+b'$ or $a'+b'+1$.
As we already have a solution with $a'+b'$ $r$-stars we will now determine if there is a solution with $a'+b'+1$ $r$-stars.

If some vertex, say $w_x$, in $X$ has degree at least $r$ in $G[X]$, then there is indeed a solution with $a'+b'+1$ $r$-stars, because of the following.
As  $(r+1)(\epsilon_a+\epsilon_b) \geq r+1$ we must have $\epsilon_a>0$ and $\epsilon_b>0$, which implies that we can pick an $r$-star 
with center in $w_x \in X$ and with at most $r\epsilon_b + \epsilon_a-1$ leaves in $X$ and at most $r \epsilon_a + \epsilon_b$ leaves in $Y$. Once this
$r$-star has been picked it is not difficult to pick an additional $a'$ $r$-stars with centers in $X$ (and leaves in $Y$) and $b'$
 $r$-stars with centers in $Y$ (and leaves in $X$), due to the above.  Therefore we may assume no vertex in $X$ has degree at least $r$ in $G[X]$.
Analogously we may assume that no vertex in $Y$ has degree at least $r$ in $G[Y]$.

If there exists $a'+1$ vertices $S_X$ in $X$ such that $|N[S_X] \cap X| \geq a'+1+ r- (r \epsilon_a + \epsilon_b) $, then proceed as follows.
We can create $a'+1$ stars in $G[X]$ such that they together have exactly $r - (r \epsilon_a + \epsilon_b)$ non-centers. By the above each star has 
less than $r$ leaves, so we can expand these $a'+1$ stars to $r$-stars by adding leaves from $Y$. This uses up  $a'+1+ r- (r \epsilon_a + \epsilon_b)$
vertices from $X$ and $(a'+1)r - (r - (r \epsilon_a + \epsilon_b))$ vertices from $Y$. Adding an additional $b'$ stars with the center in $Y$ and 
all leaves in $X$ uses up $b'$ vertices from $Y$ and $rb'$ vertices from $X$.
Therefore we have used $b'+a'r + (r \epsilon_a + \epsilon_b) = y$ vertices from $Y$ and the following number of vertices from $X$,

\[
a'+rb' + 1 + r - (r \epsilon_a + \epsilon_b) = x - (r \epsilon_b + \epsilon_a) + 1 +r -  (r \epsilon_a + \epsilon_b) = x + 1 + r - (r+1)(\epsilon_a+\epsilon_b)
\]

As $(r+1)(\epsilon_a+\epsilon_b) \geq r+1$ we note that we use at most $x$ vertices from $X$ and we have a solution with $a'+b'+1$ $r$-stars.
Analogously if there exists $b'+1$ vertices $S_Y$ in $Y$ such that $|N[S_Y] \cap Y| \geq b'+1+ r- (r \epsilon_b + \epsilon_a) $, we obtain 
$a'+b'+1$ $r$-stars. By applying Lemma~\ref{lem:Max} to $G[X]$ and $G[Y]$ we can decide the above in polynomial time.

We may therefore assume that no such $S_X$ or $S_Y$ exist. We will now show that $a'+b'$ is the optimal solution. For the sake of contradiction 
assume that we have $a^*$ $r$-stars with centers in $X$ and $b^*$ $r$-stars with centers in $Y$, such that they are vertex disjoint and
$a^*+b^* = a'+b'+1$. Without loss of generality we may assume that $a^* \geq a'+1$. 
The $a^*$ $r$-stars with centers in $X$ all have at least one leaf in $Y$ as the maximum degree in $G[X]$ is less than $r$. 
Furthermore by the above ($S_X$ does not exist) any $a'+1$ $r$-stars with centers in $X$ have more than
$r(a'+1) - ( r - (r \epsilon_a + \epsilon_b))$ leaves in $Y$. Therefore we use strictly more than the following number of vertices in $Y$.

\[
r(a'+1) - ( r - (r \epsilon_a + \epsilon_b)) + (a^*+b^* - (a'+1)) =  ra' + r \epsilon_a + \epsilon_b + b' = y
\]

This contradiction implies that the optimal solution is $a'+b'$ in this case. This completes the case when $G$ is connected.

Finally assume that $G$ is disconnected. In this case we recursively solve the problem for each connected component, which can be added together to get an optimal solution for $G$.
It is not difficult to see that the above can be done in polynomial time.\qed

\end{document}